%%%%%%%%%%%%%%%%%%%%%%%%%%%%%%%%%%%%%%%%%%%%%%%%%%%%%%%%%%%%%%%%%%%%%%%%%%%%%%%%
%2345678901234567890123456789012345678901234567890123456789012345678901234567890
%        1         2         3         4         5         6         7         8

\documentclass[letterpaper, 10 pt, conference]{ieeeconf}  % Comment this line out if you need a4paper
\usepackage{gen_settings}

\IEEEoverridecommandlockouts                              % This command is only needed if 
                                                          % you want to use the \thanks command

\overrideIEEEmargins                                      % Needed to meet printer requirements.

%In case you encounter the following error:
%Error 1010 The PDF file may be corrupt (unable to open PDF file) OR
%Error 1000 An error occurred while parsing a contents stream. Unable to analyze the PDF file.
%This is a known problem with pdfLaTeX conversion filter. The file cannot be opened with acrobat reader
%Please use one of the alternatives below to circumvent this error by uncommenting one or the other
%\pdfobjcompresslevel=0
%\pdfminorversion=4

% See the \addtolength command later in the file to balance the column lengths
% on the last page of the document

% The following packages can be found on http:\\www.ctan.org
%\usepackage{graphics} % for pdf, bitmapped graphics files
%\usepackage{epsfig} % for postscript graphics files
%\usepackage{mathptmx} % assumes new font selection scheme installed
%\usepackage{times} % assumes new font selection scheme installed
%\usepackage{amsmath} % assumes amsmath package installed
%\usepackage{amssymb}  % assumes amsmath package installed
%\usepackage{soul}

% \title{\LARGE \bf
% Verifiable Controller Synthesis through Sim2Real Gap Quantification
% }
% \title{\LARGE \bf
% Controller Verification in Simulation via Sim2Real Gap Quantification
% }
\title{\LARGE \bf
Safety-Critical Controller Verification via Sim2Real Gap Quantification
}

\author{Prithvi Akella, Wyatt Ubellacker, and Aaron D. Ames$^{1}$% <-this % stops a space
\thanks{This work was supported by the AFOSR Test and Evaluation Program, grant FA9550-19-1-0302 and Dow (\#227027AT).}% <-this % stops a space
\thanks{$^{1}$Authors are with the California Institute of Technology (email
        {\tt\small \{pakella, wubellac, ames\}@caltech.edu})}%
}

\begin{document}

\maketitle
\thispagestyle{empty}
\pagestyle{empty}

%%%%%%%%%%%%%%%%%%%%%%%%%%%%%%%%%%%%%%%%%%%%%%%%%%%%%%%%%%%%%%%%%%%%%%%%%%%%%%%%
\begin{abstract}

The well-known quote from George Box states that: ``All models are wrong, but some are useful." To develop more useful models, we quantify the inaccuracy with which a given model represents a system of interest, so that we may leverage this quantity to facilitate controller synthesis and verification.  Specifically, we develop a procedure that identifies a sim2real gap that holds with a minimum probability.  Augmenting the nominal model with our identified sim2real gap produces an uncertain model which we prove is an accurate representor of system behavior.  We leverage this uncertain model to synthesize and verify a controller in simulation using a probabilistic verification approach.  This pipeline produces controllers with an arbitrarily high probability of realizing desired safe behavior on system hardware without requiring hardware testing except for those required for sim2real gap identification.  We also showcase our procedure working on two hardware platforms - the Robotarium and a quadruped.

\end{abstract}

%%%%%%%%%%%%%%%%%%%%%%%%%%%%%%%%%%%%%%%%%%%%%%%%%%%%%%%%%%%%%%%%%%%%%%%%%%%%%%%%

\section{Introduction}
The nominal controller synthesis process for safety-critical systems follows a well-worn path: develop a model for the system of interest (from first principles, system identification, or otherwise), develop a controller in simulation based on this model, implement the controller on the safety-critical system of interest, and, most likely, tune controller parameters until the system exhibits the desired behavior.  In what will follow, we offer a method that aims to mitigate the need for extensive - and perhaps expensive - hardware testing and verification, by simultaneously verifying the simulator used for controller synthesis and the controller itself.  The hope is that such a method will augment existing model generation techniques to streamline the controller synthesis and verification procedure.

Our desire to augment existing model generation techniques stems from a desire to leverage extensive prior work in system identification and reduced-order-model control.  System identification specifically deals with techniques aimed at generating models that more closely align with the system-to-be-modeled~\cite{tangirala2018principles,morelli2016aircraft,ljung1998system,keesman2011system}. This has led to the development of state-of-the-art methods for this process, \textit{e.g.} the Volterra method which extends linear convolution to nonlinear systems~\cite{schetzen2006volterra,koh1985second} and the NARMAX method which represents system evolution as a nonlinear transform of prior states, inputs, and recorded outputs~\cite{billings2013nonlinear,rodriguez1997genetic,chen2007narx}.  There have also been more recent efforts at representing systems via Neural Networks~\cite{nelles2020nonlinear,chen1990non,billings2005new} or regressing dynamical models via Gaussian Process Regression~\cite{kocijan2005dynamic,bernardo1998regression,rodriguez2021learning}.  Additionally, regardless of the method used, there are techniques to progressively make better models that more closely align with their corresponding systems.

\begin{figure}[t]
    \centering
    \includegraphics[width = \columnwidth]{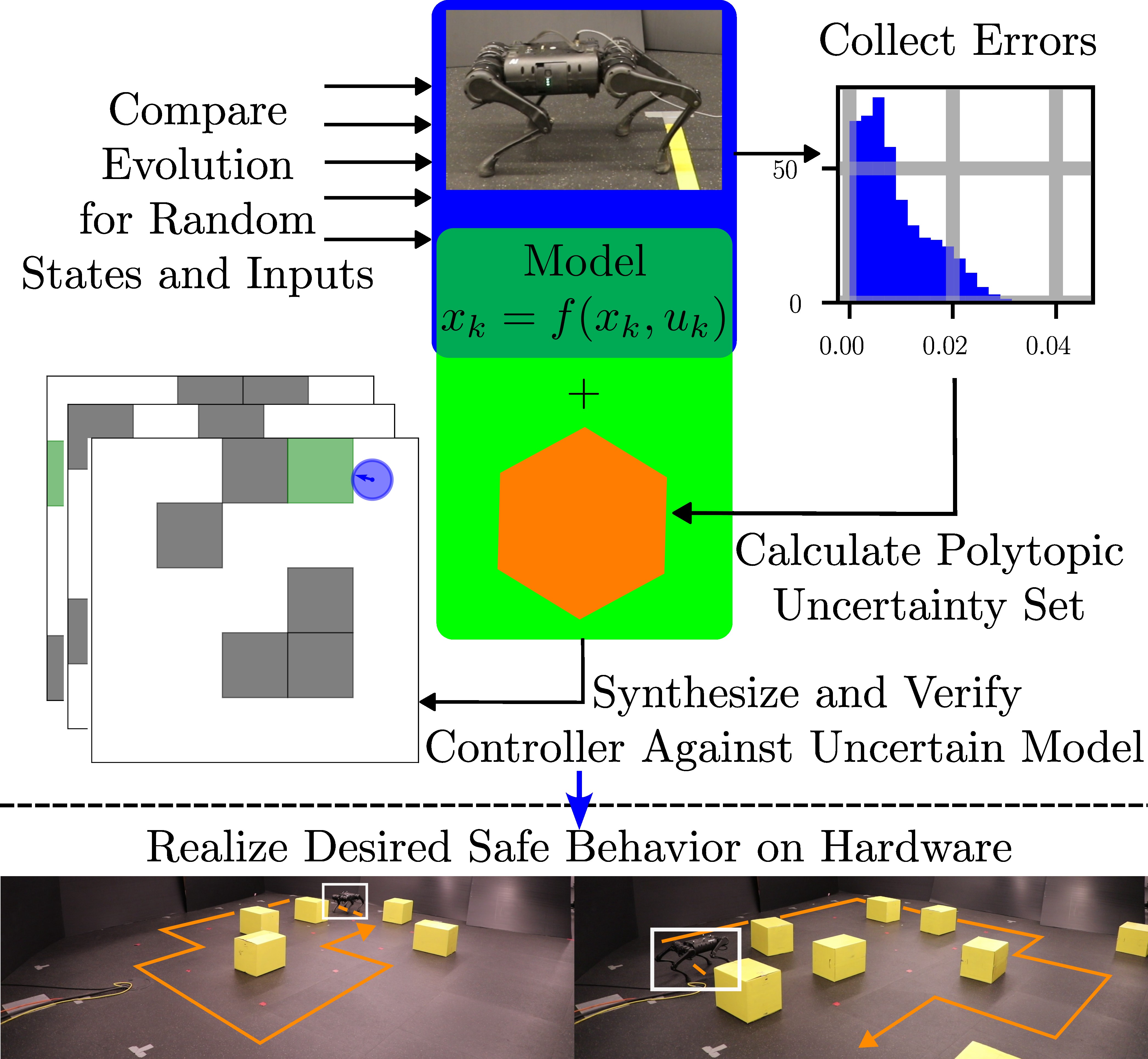}
    \caption{An overview of our proposed approach.  Models are often inaccurate despite best practices to generate accurate models.  So, we propose a method to probabilistically determine an uncertainty set to augment our nominal model.  Doing so produces an uncertain model that we prove accurately represents true system behavior to an arbitrarily high probability.  This allows for simultaneous controller synthesis and verification in simulation while retaining safe performance with limited testing on hardware.  (Quadruped highlighted in white in bottom figures)}
    \label{fig:title}
    \vspace{-0.2 in}
\end{figure}

Despite the capability of existing methods to generate good models, they oftentimes fail to capture rarer system behavior, \textit{i.e.} corner cases, stochastic disturbances, \textit{etc}.  Accounting for such behavior underlies the study into stochastic nonlinear models for system identification~\cite{abdalmoaty2019identification,giri2010block}.  While very expressive, these models tend to be quite computationally complex.  Moreover, prior work indicates that reduced order models are oftentimes sufficient to express underlying system physics~\cite{garofalo2012walking,thieffry2018control,xiong2018bipedal}.  These simpler models are also often leveraged in autonomy stacks for complex systems.  Additionally, augmenting these reduced order models with uncertainty bounds and accounting for these bounds through robust control~\cite{thieffry2018control,su1999reduced}, input-to-state stabilizing barrier functions~\cite{kolathaya2018input,taylor2020learning,xu2015robustness}, or other techniques, \textit{i.e.} learning~\cite{mesbah2018stochastic,soloperto2018learning,schneider1996exploiting}, tends to yield safe and reliable controllers.  This prompts the question then: how should the ``correct" uncertainty bound be determined?  If we use too large a bound, the controller might be too conservative, and if we use too small a bound, then the controller could be unsafe.

\newidea{Our Contribution:} Our work aims to address this question of determining the magnitude of uncertainty one should consider for a provided model that will be used in controller synthesis.  In this vein, our results are two-fold.  First, we provide a norm bound on the uncertainty between a provided model and its corresponding system.  This bound effectively reads as: with minimum probability $1-\epsilon$, the evolution of the system at one time-step will lie within the bounding region prescribed by the evolution of the provided model plus an uncertainty lying within a polytopic set we calculate.  Our second result makes use of this uncertainty bound and the authors' prior work in stochastic verification~\cite{akella2022scenario,akella2022sample} to verify the generated controller against the uncertain model.  This results in a pipeline for safety-critical controller synthesis and verification that translates to hardware performance.

\newidea{Implementation:} We showcase our contributions on two hardware platforms, the Georgia Tech Robotarium~\cite{wilson2020robotarium} and a quadruped.  For both platforms, we leverage our first result to determine their discrepancies with respect to a nominal unicycle model.  Then, we use this discrepancy to synthesize and verify a controller, purely in simulation, that is guaranteed to successfully steer the uncertain unicycle model in a navigation and static obstacle avoidance scenario.  For the Robotarium, we also add multiple, uncontrolled, stochastically evolving agents.  For both systems, we then show that these verified controllers exhibit similar levels of performance on hardware without any parameter tuning or testing required - effectively working "out-of-the-box" after our procedure.  Furthermore, these controllers successfully steered their systems to satisfy their control objectives despite a wide variety of randomized test scenarios.

\newidea{Paper Structure:} First, we provide the necessary mathematical background for our procedure in Section~\ref{sec:backgroundinfo} and formally define the sim2real gap in Section~\ref{sec:prob_statements}.  We describe our approach to sim2real gap quantification in Section~\ref{sec:sim2real_quantification} and our controller verification procedure in Section~\ref{sec:controller_generation_verification}.  Finally, we detail all experimental demonstrations of our procedure in Section~\ref{sec:experimentation}.

\section{A Brief Review of Scenario Optimization}
\label{sec:backgroundinfo}
In short, both aspects of our procedure arise from separate uses of the scenario optimization procedure~\cite{campi2008exact,campi2018wait}.  Scenario optimization identifies robust solutions to uncertain convex optimization problems of the following form:
\begin{equation}
    \label{eq:uncertain_program}
    \tag{UP}
    \begin{aligned}
        \sovar^* & = \argmin_{\sovar \in \sospace \subset \mathbb{R}^d}~ & &c^T\sovar, \\
        &~~\mathrm{subject~to}~ & &\sovar \in \sospace_{\delta},~\delta \in \Delta.
    \end{aligned}
\end{equation}
Here, \eqref{eq:uncertain_program} is the uncertain program as $\delta \in \Delta$ is a sample of some random variable in the probability space $\Sigma = (\Omega, \mathcal{F}, \prob)$, with sample space $\Omega = \Delta$, (perhaps) unknown event space $\mathcal{F}$, and (perhaps) unknown probability measure $\prob$.  Convexity is assured via assumed convexity in the spaces $\sospace$ and $\sospace_{\delta}$.  Furthermore, since $\Delta$ is typically a set of infinite cardinality, \textit{i.e.} $|\Delta| = \infty$, identification of a solution $\sovar^*$ such that $\sovar^* \in \sospace_{\delta}~\forall~\delta \in \Delta$ is hard.

To resolve this issue, the study of scenario optimization solves a related optimization problem formed from an $N$-sized sample set of the random constraint samples $\delta$ and provides a probabilistic guarantee on the robustness of the corresponding solution $\sovar^*_N$.  Specifically, if we were to take an $N$-sized set of samples $\deltaset$, we could construct the following scenario program:
\begin{equation}
    \label{eq:scenario_program}   
    \tag{RP-N}
    \begin{aligned}
        \sovar^*_N & = \argmin_{\sovar \in \sospace \subset \mathbb{R}^d}~ & &c^T\sovar, \\
        &~~\mathrm{subject~to}~ & &\sovar \in \bigcap\limits_{\delta_i \in \{\delta_j\}_{j=1}^N} \sospace_{\delta_i}.
    \end{aligned}
\end{equation}
\noindent Then, we require the following assumption.
\begin{assumption}
\label{assump:RPN_solvability}
The scenario program~\eqref{eq:scenario_program} is solvable for any $N$-sample set $\deltaset$ and has a unique solution $\sovar^*_N$.
\end{assumption}

Assumption~\ref{assump:RPN_solvability} then guarantees existence of a scenario solution $\sovar^*_N$ for~\eqref{eq:scenario_program} for any provided sample set $\deltaset$.  As such, we can define a set containing those samples $\delta \in \Delta$ to which the scenario solution $\sovar^*_N$ is not robust, \textit{i.e.} $
F(\sovar) = \{\delta \in \Delta~|~\sovar \not \in \sospace_{\delta}\}. $
With this set definition we can formally define the \textit{violation probability} of our solution.
\begin{definition}
\label{def:violation}
The \textit{violation probability} $V(\sovar)$ of a given $\sovar \in \sospace$ is defined as the probability of sampling a constraint $\delta$ to which $\sovar$ is not robust, \textit{i.e.} $V(\sovar) = \prob[\delta \in F(\sovar)]$ .
\end{definition}
\noindent Then, the main theorem is as follows:
\begin{theorem}[Adapted from Theorem 1 in~\cite{campi2008exact}]
\label{thm:scenario_opt}
Let Assumption~\ref{assump:RPN_solvability} hold.  The following inequality is true:
\begin{equation}
    \prob^N[V(\sovar^*_N) > \epsilon] \leq \sum_{i=0}^{d-1} \binom{N}{i} \epsilon^i(1-\epsilon)^{N-i}.
\end{equation}
\end{theorem}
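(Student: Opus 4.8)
The plan is to reduce the entire statement to a property of \emph{support constraints} and then to a combinatorial counting argument over the i.i.d. samples. First I would define, for the scenario program~\eqref{eq:scenario_program}, a support constraint as a sample $\delta_k \in \deltaset$ whose removal changes the optimizer, i.e. $\sovar^*_{N} \ne \sovar^*_{N\setminus\{k\}}$. The first key step is a purely geometric lemma: because $\sospace$ and each $\sospace_\delta$ are convex and $\sovar \in \mathbb{R}^d$, the number of support constraints of~\eqref{eq:scenario_program} is at most $d$. I would prove this by a Helly-type / active-constraint argument, namely that the optimizer of a convex program in $\mathbb{R}^d$ is pinned down by at most $d$ of the active constraints, so removing any constraint outside this distinguished set cannot move $\sovar^*_N$. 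Assumption~\ref{assump:RPN_solvability} is precisely what lets me speak of a single well-defined $\sovar^*_N$ and of the solution $\sovar^*_I$ obtained from any sub-collection of constraints indexed by $I \subseteq \{1,\dots,N\}$.

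With the support lemma in hand, the core of the proof is to compute the distribution of $V(\sovar^*_N)$ in the \emph{fully supported} case (exactly $d$ support constraints), which I expect to be the extremal case for the bound. Since $\sovar^*_N$ coincides with $\sovar^*_I$ for the generically unique index set $I$ of its $d$ support constraints, and since the $N$ samples are i.i.d., I would decompose the event $\{V(\sovar^*_N) > \epsilon\}$ over all $\binom{N}{d}$ candidate support sets and exploit exchangeability to replace the sum by $\binom{N}{d}$ times the contribution of the fixed set $I_0=\{1,\dots,d\}$. Conditioning on the $d$ samples in $I_0$ fixes $\sovar^*_{I_0}$ and hence $v := V(\sovar^*_{I_0})$; for $I_0$ to actually be the support set of the full program, each of the remaining $N-d$ independent samples must be satisfied by $\sovar^*_{I_0}$, an event of conditional probability $(1-v)^{N-d}$. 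This yields
\begin{equation}
    \prob^N[V(\sovar^*_N) > \epsilon] = \binom{N}{d}\int_\epsilon^1 (1-v)^{N-d}\, g(v)\, dv,
\end{equation}
where $g$ is the density of $v$ on the event that all $d$ constraints in $I_0$ are genuine support constraints of the $d$-constraint subprogram.

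The remaining, and I expect hardest, step is to show $g(v)=d\,v^{d-1}$, i.e. that the violation of a fully supported $d$-constraint program is $\mathrm{Beta}(d,1)$-distributed. I would establish this by induction on $d$, equivalently by applying the same exchangeability decomposition recursively to the $d$ support samples themselves, so that each support constraint contributes one independent ``layer'' of violated probability mass. Substituting $g(v)=d\,v^{d-1}$ and invoking the classical incomplete-Beta / binomial identity
\begin{equation}
    \binom{N}{d}\int_\epsilon^1 d\,v^{d-1}(1-v)^{N-d}\,dv = \sum_{i=0}^{d-1}\binom{N}{i}\epsilon^{i}(1-\epsilon)^{N-i}
\end{equation}
would give the stated expression with equality in the fully supported case.

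Finally, I would upgrade this equality to the inequality $\le$ claimed in the theorem by handling degeneracy. When there are fewer than $d$ support constraints, or when the optimizer is non-unique or the support set is ambiguous, I would perturb the problem with a vanishing tie-breaking rule (or embed it into a fully supported problem with added support directions) whose violation stochastically dominates the original, so that the fully supported computation serves as a uniform upper bound. The main obstacles are therefore twofold: the clean proof of the at-most-$d$ support bound from convexity, and the recursive identification of the $v^{d-1}$ density together with the measurability and non-degeneracy technicalities that make the exchangeability decomposition rigorous.
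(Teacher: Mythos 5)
The paper does not actually prove this theorem: it is imported (``adapted'') from Theorem 1 of~\cite{campi2008exact}, so there is no in-paper argument to compare against. What you have written is, in outline, a faithful reconstruction of the proof in that reference: the support-constraint definition, the bound of at most $d$ support constraints for a convex program over $\mathbb{R}^d$ (the Calafiore--Campi support lemma, proved by exactly the Helly/Radon-type argument you describe, with Assumption~\ref{assump:RPN_solvability} supplying well-definedness of the sub-problem optimizers), the exchangeability decomposition over the $\binom{N}{d}$ candidate support sets with the conditional factor $(1-v)^{N-d}$, the incomplete-Beta/binomial identity (which is correct as stated), and the reduction of degenerate problems to the fully supported case by a perturbation argument. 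The fully supported case does give equality and degeneracy only improves the bound, so the overall architecture is right.

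The one step I would push back on is your derivation of the density $g(v)=d\,v^{d-1}$ ``by induction on $d$'' via a recursive exchangeability argument over the support samples themselves. It is not clear this recursion closes: the violation probability of a sub-collection of a support set is not related to the violation of the full $d$-constraint program in the way a clean induction would need, and this is precisely the hardest point of the theorem. Campi and Garatti instead fix $d$ and vary the sample size: for every $m\ge d$ the support-set events almost surely partition the sample space of a fully supported problem, which by your own exchangeability computation forces the moment conditions $\binom{m}{d}\int_0^1(1-v)^{m-d}\,dF(v)=1$ for all $m\ge d$, and this infinite family of conditions uniquely determines $F(v)=v^d$, i.e., the $\mathrm{Beta}(d,1)$ law you conjectured. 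If you replace the induction with this moment/uniqueness argument (or rigorously close your recursion), the proposal becomes a complete proof; as a minor additional point, your displayed decomposition is an equality only once the support-set events are almost surely a partition, which is exactly what full supportedness plus uniqueness of the optimizer buys you.
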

\noindent In Theorem~\ref{thm:scenario_opt} above, $N$ is the number of sampled constraints $\delta$ for the scenario program~\eqref{eq:scenario_program}; $\sovar^*_N$ is the scenario solution to the corresponding scenario program; $V(\sovar^*_N)$ is the violation probability of that solution as per Definition~\ref{def:violation}; $d$ is the dimension in which $\sovar$ lies, \textit{i.e. } $\sovar \in \mathbb{R}^d$; and $\prob^N$ is the induced probability measure over sets of $N$-samples of $\delta$ given the probability measure $\prob$ for $\delta$.  
\section{Probabilistic Sim2Real Gap Quantification}
Our method for sim2real gap quantification will express the identification of such a gap as a convex optimization problem with (perhaps) infinite constraints, and we will take a scenario approach to solve this problem.  This approach will yield a robust result - a large enough sim2real gap - that holds with some minimum probability.  The next section will formally define what we mean by a sim2real gap.
\subsection{Defining the Gap}
\label{sec:prob_statements}
First, we denote our true system via $x$ and our nominal model via $\hat x$, \textit{i.e.} $\forall~k,j=0,1,2,\dots$,
\begin{equation}
    \tag{SYS}
    \label{eq:systems}
    \begin{aligned}
        & \mathrm{\textbf{True:}}~& x_{k+1} & = f(x_k,u_k),~x_{k,k+1} \in \mathcal{X},~u_k \in \mathcal{U}, \\
        & \mathrm{\textbf{Sim:}}~ & \hat x_{j+1} & = \hat f(\hat x_j,\hat u_j),~\hat x_{j,j+1} \in \hat{\mathcal{X}},~\hat u_{j} \in \hat{\mathcal{U}}.
    \end{aligned}
\end{equation}
As an example consistent with the demonstrations to follow, the true system could be a quadruped with the nominal model a unicycle system we aim to represent it with.

To provide a method of comparing the evolution of the two systems in~\eqref{eq:systems}, we will define two maps - $M_x$ which projects the true system state $x$ to the model state $\hat x$, and $M_u$ which extends the model input $\hat u$ to the true system input $u$:
\begin{equation}
    \tag{MAPS}
    \label{eq:maps}
        M_x: \mathcal{X} \to \hat{\mathcal{X}}, \quad M_u: \hat{\mathcal{U}} \times \mathcal{X} \to \mathcal{U}.
\end{equation}
These maps in~\eqref{eq:maps} let us formalize the gap we aim to identify between the two systems.  First, we assume that we can command an input $u$ to the true system by prescribing an input $\hat u$ that we would provide to our associated model.  Then, we assume we can measure the projected true system state $M_x(x_k)$ at some time-step $K$, \textit{i.e.}, $\forall~k=0,1,\dots,K$,
\begin{equation}
    \tag{OBS}
    \label{eq:observation_function}
   x_{k+1} = f(x_k,M_u(\hat{u},x_k)),~ O(x_0,\hat{u}) = M_x(x_K).
\end{equation}

To put this in the context of the quadruped/unicycle model example, the underlying control loop operates at $1$ kHz and provides a natural discrete abstraction at that time-step.  Since, we desire our unicycle model to update at $10$ Hz then, $K=100$.  $M_x$ is just the projection of the quadruped state to its unicycle components.  Likewise, $M_u$ is the underlying control loop that runs at $1$ kHz to realize the commanded forward walking speed and rotation.  While these maps seem abstract, we will provide examples in Section~\ref{sec:experimentation}.

This observation map $O$ in~\eqref{eq:observation_function} permits us to quantify a discrepancy between model evolution and observed true system evolution.  However, we only want to make this comparison when the projection of the initial state $M_x(x_0) \in \hat{\mathcal{X}}$ - as otherwise, the projected initial state is not addressed by our representative model.  This results in the following space definition and problem statement:
\begin{equation}
    \label{eq:desired_true_ss}
    \Pi(M_x) = \left\{x \in \mathcal{X}~|~M_x(x) \in \hat{\mathcal{X}}\right\}
\end{equation}

\begin{definition}
    \label{def:sim2real_gap}
    Let $O$ be as defined in~\eqref{eq:observation_function}, $\hat f, \mathcal{X}, \hat{\mathcal{U}}$ be as defined in~\eqref{eq:systems}, and $\Pi(M_x)$ be as defined in~\eqref{eq:desired_true_ss}.  The \textit{sim2real gap} $\simrealgap \in \mathbb{R}$ is such that $~\forall~(x_0,\hat u) \in \Pi(M_x) \times \hat{\mathcal{U}}$,
    \begin{equation}
        \label{eq:sim2real_formal_eq}
        \simrealgap \geq \left\|O(x_0,\hat{u}) - \hat f\left(M_x(x_0), \hat{u}\right) \right\|.
\end{equation}
\end{definition}

\subsection{Quantifying the Gap}
\label{sec:sim2real_quantification}
To start, we express identification of the sim2real gap $\simrealgap$ in~\eqref{eq:sim2real_formal_eq} as an optimization problem:
\begin{align}
    \label{eq:sim2real_base}
        \simrealgap & = \argmin_{r \in\mathbb{R}}~ & &r, \\
        &~~\mathrm{subject~to}~ & &r \geq  \left\|O(x_0,\hat{u}) - \hat f\left(M_x(x_0), \hat{u}\right) \right\|, \\
        & & & \qquad \qquad \dots \forall~(x_0,\hat{u}) \in \Pi(M_x) \times \hat{\mathcal{U}}
\end{align}

This problem is (likely) impossible to solve as posed.  So, taking inspiration from prior work, we aim instead to find a ``good" solution to~\eqref{eq:sim2real_base} via a randomized, sample-based approach (See Section 5 in~\cite{akella2022sample}).  Such a solution will only hold with some minimum probability $\epsilon \in [0,1)$.  Although, we can make $\epsilon$ arbitrarily close to $1$ with enough samples.

\newidea{Description of our Approach:} Our approach hinges on the ability to independently draw state and input samples from a static distribution $\pi$ over the combined state and input spaces $\mathcal{X} \times \hat{\mathcal{U}}$.  In the experimental demonstrations to follow, we will argue and show evidence that our chosen method produces independent samples from an unknown distribution $\pi$.  However, the specifics of crafting such a distribution will likely be different for different system/model pairs - this is where we anticipate a large portion of future work in this vein to lie.  So, for the moment, we will simply assume the existence of such a distribution $\pi$ formalized as follows:

\begin{definition}
\label{def:comparison_distribution}
The \textit{comparison distribution} $\pi$ maps subsets of the combined state and model input space $\mathcal{X} \times \hat{\mathcal{U}}$ - as defined in~\eqref{eq:systems} - to $[0,1]$ \textit{i.e.} $\forall~A \subseteq \mathcal{X} \times \hat{\mathcal{U}},~\pi(A) \in [0,1]$.  Furthermore, $\pi$ ``covers" the space of states and inputs generating constraints for~\eqref{eq:sim2real_base}, \textit{i.e.}, $\pi \left( \Pi(M_x) \times \hat{\mathcal{U}} \right) = 1$.
\end{definition}

Using this comparison distribution $\pi$, we can construct a scenario program for sets of $N$ samples $\left\{\left(x^l_0,\hat{u}^l\right)\right\}_{l=1}^N$ of state and input pairs $(x_0,\hat{u})$ distributed by $\pi$:
\begin{align}
    \label{eq:sim2real_scenario}
        \simrealgap^*_N & = \argmin_{r \in\mathbb{R}}~ & &r, \\
        &~~\mathrm{subject~to}~ & &r \geq  \left\|O(x^j_0,\hat{u}) - \hat f\left(M_x(x_0), \hat{u}^j\right) \right\|, \\
        & & & \quad \dots \forall~(x^j_0,\hat{u}^j) \in \left\{\left(x^l_0,\hat{u}^l\right)\right\}_{l=1}^N.
\end{align}
The resulting solution $\simrealgap^*_N$ is an ever-increasing lower bound on $\simrealgap$ as expressed in the following theorem.
\begin{theorem}
\label{thm:sim2real_theorem}
Let $\simrealgap^*_N$ be the solution to~\eqref{eq:sim2real_scenario} with $O$ as defined in~\eqref{eq:observation_function}, $\hat f$ as defined in~\eqref{eq:systems}, and $\pi$ as defined in Definition~\ref{def:comparison_distribution}.  Then, $\forall~\epsilon \in [0,1]$
\begin{equation}
    \begin{gathered}
    S_1 \triangleq \prob_{\pi}\left[\simrealgap^*_N \geq \left\|O(x_0,\hat{u}) - \hat f\left(M_x(x_0), \hat{u}\right) \right\| \right], \\
    \prob^N_{\pi}\left[S_1 \geq 1-\epsilon \right] \geq 1-(1-\epsilon)^N.
    \end{gathered}
\end{equation}
In other words, $\simrealgap^*_N$ is larger than the sim2real gap for any sampled state and input pair $(x_0,\hat{u})$ from $\pi$ with minimum probability $1-\epsilon$ and confidence $1-(1-\epsilon)^N$.
\end{theorem}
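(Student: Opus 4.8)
The plan is to recognize the scenario program \eqref{eq:sim2real_scenario} as a scalar instance of the generic scenario program \eqref{eq:scenario_program} and then invoke Theorem~\ref{thm:scenario_opt} with $d = 1$. Concretely, I would take the decision variable to be $\sovar = r \in \mathbb{R}$, the cost to be $c^T\sovar = r$, identify each uncertainty sample with a state–input pair $\delta = (x_0,\hat u)$ distributed according to the comparison distribution $\pi$ (so that $\pi$ plays the role of $\prob$), and define the sampled constraint set as $\sospace_\delta = \left\{ r \in \mathbb{R}~|~r \geq \left\| O(x_0,\hat u) - \hat f(M_x(x_0),\hat u) \right\| \right\}$. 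Since each such $\sospace_\delta$ is a closed half-line and $\sospace = \mathbb{R}$, the convexity required by \eqref{eq:scenario_program} holds, and the dimension of the decision variable is $d = 1$.

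Before applying the theorem I would verify Assumption~\ref{assump:RPN_solvability}. For any $N$-sample set, the program reduces to minimizing $r$ subject to $r$ exceeding finitely many nonnegative scalars, so its unique optimizer is $\simrealgap^*_N = \max_l \left\| O(x_0^l,\hat u^l) - \hat f(M_x(x_0^l),\hat u^l) \right\|$, which is well defined and unique; hence the assumption is met and Theorem~\ref{thm:scenario_opt} applies.

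The central observation is that $S_1$ is precisely the complement of the violation probability of Definition~\ref{def:violation} evaluated at $\simrealgap^*_N$. Indeed, the violation set is $F(\simrealgap^*_N) = \left\{ (x_0,\hat u)~|~\simrealgap^*_N < \left\| O(x_0,\hat u) - \hat f(M_x(x_0),\hat u) \right\| \right\}$, so that $V(\simrealgap^*_N) = \prob_\pi[\delta \in F(\simrealgap^*_N)]$ and therefore $S_1 = 1 - V(\simrealgap^*_N)$. The event $\{ S_1 \geq 1-\epsilon \}$ is thus identical to the event $\{ V(\simrealgap^*_N) \leq \epsilon \}$.

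To finish, I would substitute $d = 1$ into Theorem~\ref{thm:scenario_opt}, collapsing the sum to its single $i = 0$ term and yielding $\prob^N_\pi[V(\simrealgap^*_N) > \epsilon] \leq (1-\epsilon)^N$. Taking complements gives $\prob^N_\pi[V(\simrealgap^*_N) \leq \epsilon] \geq 1 - (1-\epsilon)^N$, which by the identification above is exactly the claimed bound $\prob^N_\pi[S_1 \geq 1-\epsilon] \geq 1 - (1-\epsilon)^N$. I expect the only real subtlety to be the bookkeeping that casts the sim2real program into the $\sospace_\delta$ formalism and the verification that $\pi$ legitimately serves as the constraint-sampling measure — which is guaranteed by the covering property $\pi(\Pi(M_x) \times \hat{\mathcal{U}}) = 1$ in Definition~\ref{def:comparison_distribution}; everything else is a direct specialization of the established scenario bound to the scalar case.
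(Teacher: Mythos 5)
Your proposal is correct and follows essentially the same route as the paper, which likewise invokes Theorem~\ref{thm:scenario_opt} after noting that $V(\simrealgap^*_N) = 1 - S_1$ and then reverses the inequality. You simply spell out the details the paper leaves implicit (the identification of $\sospace_\delta$, the verification of Assumption~\ref{assump:RPN_solvability} via the max formula, and the collapse of the binomial sum at $d=1$), all of which check out.
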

\begin{proof}
This proof follows almost directly from Theorem~\ref{thm:scenario_opt}, noting that the violation probability,
\begin{equation}
    V(\simrealgap^*_N) = 1-S_1.
\end{equation}
Then, reversing inequalities and changing arguments appropriately provides the desired result.
\end{proof}

\begin{figure}[t]
    \centering
    \includegraphics[width = \columnwidth]{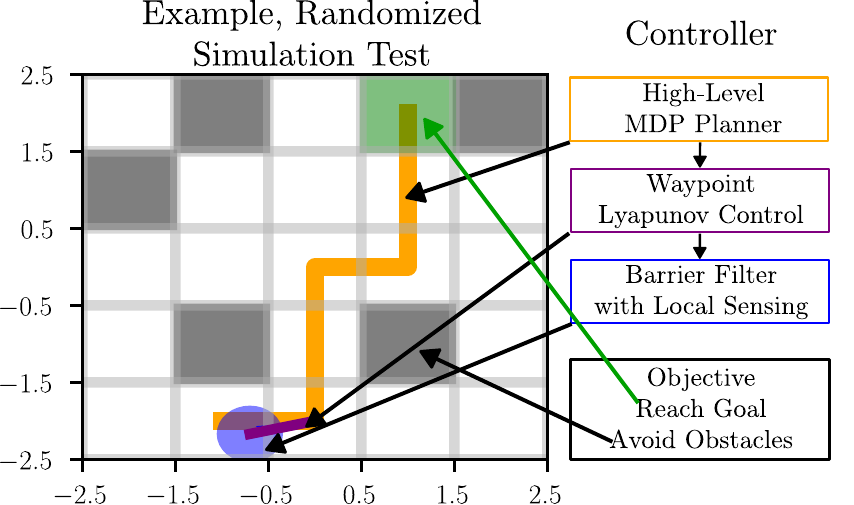}
    \caption{As part of our pipeline, we verify controllers against our uncertain model.  Shown above is an example, randomized test scenario for controller development for the quadruped.   The test scheme remains the same for the two different systems (Robotarium and quadruped) as their controller objectives are similar.  This information is further explained in Section~\ref{sec:experimentation}.}
    \vspace{-0.2 in}
    \label{fig:controller_architecture}
\end{figure}

\noindent In other words, Theorem~\ref{thm:sim2real_theorem} states that the solution $\simrealgap^*_N$ to~\eqref{eq:sim2real_scenario} is a decent approximation of the sim2real gap $\simrealgap$.

\section{Safety-Critical Controller Verification}
\label{sec:controller_generation_verification}
Now, we can leverage this approximate sim2real gap $\simrealgap^*_N$ to facilitate controller synthesis and verification in simulation. This section details those efforts.

\newidea{Constructing a Valid Uncertain Model:} To start, we can use the resulting probabilistic sim2real gap $\simrealgap^*_N$ from~\eqref{eq:sim2real_scenario} to augment our nominal model $\hat x$ in~\eqref{eq:systems} and define an uncertain system denoted via $\Tilde x$.  Specifically, we will first define a feasible space of disturbances,
\begin{equation}
    \label{eq:uncertain_polytope}
    D = \{d \in \mathbb{R}^n~|~\|d\| \leq \simrealgap^*_N\},
\end{equation}
and use $D$ to define our uncertain system:
\begin{equation}
    \label{eq:uncertain_model}
\Tilde x_{j+1} = \hat f(\Tilde x_j,\hat u_j) + d_j,~d_j \sim \uniform[D],
\end{equation}
with $\uniform[D]$ the uniform distribution over $D$.  If we define a one-step reachable space provided a model state and input $(\hat x, \hat u) \in \hat{\mathcal{X}} \times \hat{\mathcal{U}}$,
\begin{equation}
    \label{eq:uncertain_reachable_space}
    \mathcal{R}(\hat x, \hat u) = \left\{\hat f(\hat x,\hat u) + d,~\forall~d \in D\right\}, 
\end{equation}
then we have the following result regarding the evolution of the true system and this reachable space.
\begin{corollary}
\label{corr:model_accuracy}
Let $O$ be as defined in~\eqref{eq:observation_function}, $\mathcal{R}$ be as defined in~\eqref{eq:uncertain_reachable_space}, $M_x$ be as defined in~\eqref{eq:maps}, $\pi$ be as per Definition~\ref{def:comparison_distribution}, and $\simrealgap^*_N$ be as defined in~\eqref{eq:sim2real_scenario}.  Then, $\forall~\epsilon \in[0,1]$,
\begin{equation}
    \begin{gathered}
    S_2 \triangleq \prob_{\pi}\left[O(x_0,\hat{u}) \in \mathcal{R}(M_x(x_0),\hat{u}) \right], \\
    \prob^N_{\pi}\left[S_2 \geq 1-\epsilon \right] \geq 1-(1-\epsilon)^N.
    \end{gathered}
\end{equation}
In other words, the probability that the observed evolution of the true system $O(x_0,\hat{u})$ lies in the reachable space of our uncertain model $\mathcal{R}(M_x(x_0),\hat{u})$ is at least $1-\epsilon$ with confidence $1-(1-\epsilon)^N$.
\end{corollary}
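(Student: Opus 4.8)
The plan is to recognize that Corollary~\ref{corr:model_accuracy} is nothing more than a geometric reformulation of Theorem~\ref{thm:sim2real_theorem}: the set-membership event defining $S_2$ and the norm-inequality event defining $S_1$ are in fact the same event, so the confidence bound transfers verbatim with no new probabilistic argument required.

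First I would unpack the definitions of the disturbance set $D$ in~\eqref{eq:uncertain_polytope} and the reachable set $\mathcal{R}$ in~\eqref{eq:uncertain_reachable_space}. By construction, a point $y$ lies in $\mathcal{R}(M_x(x_0),\hat u)$ if and only if there exists $d \in D$ with $y = \hat f(M_x(x_0),\hat u) + d$. Because $D$ is precisely the closed ball of radius $\simrealgap^*_N$ about the origin, this membership is equivalent to the single scalar condition $\|y - \hat f(M_x(x_0),\hat u)\| \leq \simrealgap^*_N$.

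Next I would substitute $y = O(x_0,\hat u)$ into this equivalence, which shows that the event $O(x_0,\hat u) \in \mathcal{R}(M_x(x_0),\hat u)$ is identical to the event $\simrealgap^*_N \geq \|O(x_0,\hat u) - \hat f(M_x(x_0),\hat u)\|$. Hence the two inner probabilities coincide, $S_2 = S_1$, and the confidence statement for $S_2$ in Corollary~\ref{corr:model_accuracy} is literally the confidence statement for $S_1$ in Theorem~\ref{thm:sim2real_theorem}. The result then follows by directly invoking that theorem.

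The only real obstacle is bookkeeping: one must check that the quantifiers line up, i.e. that whenever the residual $O(x_0,\hat u) - \hat f(M_x(x_0),\hat u)$ has norm at most $\simrealgap^*_N$ it can serve as the witnessing disturbance $d$, and conversely that any witnessing $d$ forces the residual into $D$. This is immediate from the ball structure of $D$, so no fresh appeal to the scenario bound of Theorem~\ref{thm:scenario_opt} is needed beyond what Theorem~\ref{thm:sim2real_theorem} already supplies.
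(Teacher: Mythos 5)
Your proposal is correct and matches the paper's own argument essentially verbatim: both establish that the event $O(x_0,\hat{u}) \in \mathcal{R}(M_x(x_0),\hat{u})$ is identical to the norm-inequality event in Theorem~\ref{thm:sim2real_theorem} by unpacking the definitions of $D$ in~\eqref{eq:uncertain_polytope} and $\mathcal{R}$ in~\eqref{eq:uncertain_reachable_space}, and then transfer the confidence bound directly. No differences worth noting.
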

\begin{proof}
This is a direct application of Theorem~\ref{thm:sim2real_theorem}, as for any state and input pair $(x_0,\hat{u}) \in \mathcal{X} \times \hat{\mathcal{U}}$, $\simrealgap^*_N \geq \left\|O(x_0,\hat{u}) - \hat f\left(M_x(x_0), \hat{u}\right) \right\|$ iff $O(x_0,\hat{u}) \in \mathcal{R}(M_x(x_0),\hat{u})$, by definition of $\mathcal{R}$ in~\eqref{eq:uncertain_reachable_space} and $D$ in~\eqref{eq:uncertain_polytope}.
\end{proof}

\begin{figure}[t]
    \centering
    \includegraphics[width = \columnwidth]{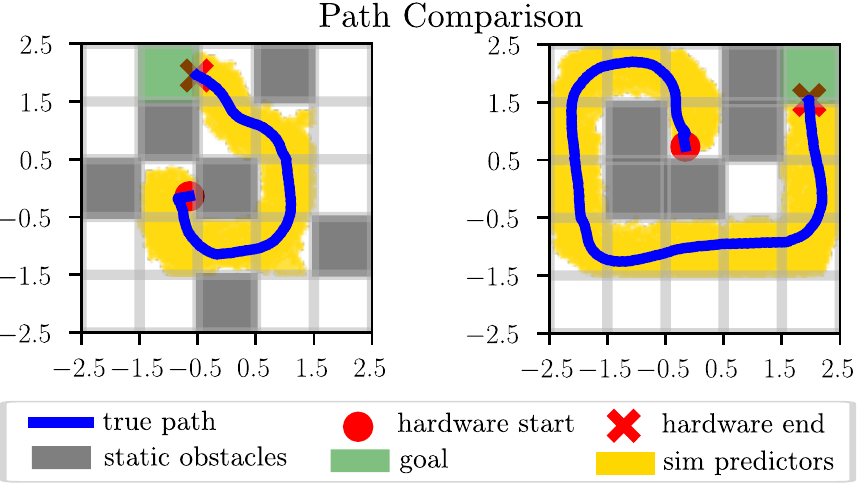}
    \vspace{-0.2 in}
    \caption{Our procedure increases the confidence that those controllers that pass the verification step will exhibit similar performance on hardware as they did in simulation, even if we did not directly verify the controller on hardware.  This increased confidence arises through our verification of the uncertain model, whose reachable set we prove encapsulates true system evolution to high probability.  This can be seen in the figures above, as the quadruped's evolution (blue) lies within its associated uncertain simulator's predictions (gold).  This information is further explained in Section~\ref{sec:experimentation}.}
    \label{fig:robotarium_synthesis}
    \vspace{-0.2 in}
\end{figure}

\noindent In other words, Corollary~\ref{corr:model_accuracy} tells us that even though a single step of our uncertain model~\eqref{eq:uncertain_model} may not be an accurate representation of our true system's evolution, the space of all possible evolutions does, to high probability, contain the evolution of our true system.

For controller synthesis and verification then, Corollary~\ref{corr:model_accuracy} tells us that our uncertain model is a decent approximator of true system behavior.  Therefore, any controller that exhibits good performance on the uncertain model, should likewise exhibit good performance on the true system.  Quantification of ``good" performance and verifying a controller's ability to realize ``good" performance is the subject of the next subsection, which follows from the risk-aware probabilistic verification procedure detailed in Section 3 in~\cite{akella2022scenario}.

\begin{figure*}[t]
    \centering
    \includegraphics[width = \textwidth]{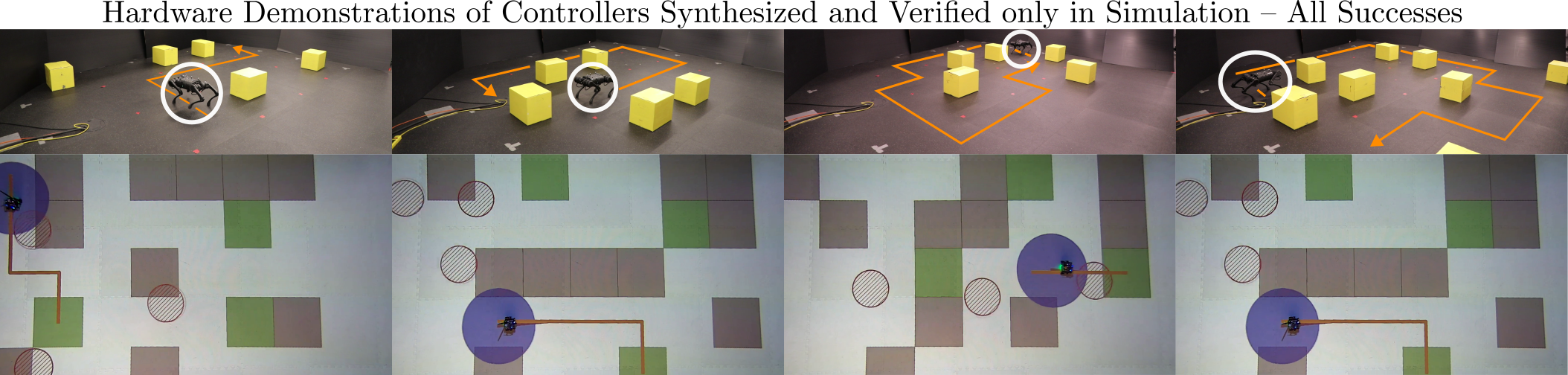}
    \caption{Since we verified our controllers against the uncertain model produced by our procedure, we expect that the closed-loop hardware systems should realize similar, satisfactory behavior.  Indeed, for the first $10$ runs on the quadruped and the first $40$ runs on the Robotarium, the agents were able to avoid static/moving obstacles and navigate to their goals successfully, despite a wide variety of randomized test scenarios.  The first four runs for both systems are depicted above.  This ability to synthesize and verify controllers in simulation, with confidence that similar behaviors will manifest in the true system without requiring additional testing, is the main benefit of our proposed approach.  Paths for all tests are shown in orange, and the quadruped is highlighted in white.  The multi-level control architecture is depicted in Figure~\ref{fig:controller_architecture}.}
    \label{fig:experiment_tiles}
    \vspace{-0.2 in}
\end{figure*}

\newidea{Verifying against Safety Metrics:}  First, provided a parameterized controller $\modelcontroller: \hat{\mathcal{X}} \times \Theta \to \hat{\mathcal{U}}$ with parameter $\theta \in \Theta$ and noise sequence $\xi$ where $\xi_j = d \sim \uniform[D]$, we define $\phi$ to be the closed-loop trajectory to our uncertain model~\eqref{eq:uncertain_model}, \textit{i.e.} with $J > 0$ and $\Tilde x_0 = \hat x_0$,
\begin{equation}
    \label{eq:uncertain_trajectory}
    \phi^{\modelcontroller}(\hat x_0,\xi,\theta,J) = \Tilde x_J,~\Tilde x_{j+1} = \hat f\left(\Tilde x_j,\modelcontroller(\Tilde x_j,\theta)\right) + \xi_j.
\end{equation}
Second, inspired by traditional safety measures, \textit{e.g.} barrier functions over the system state, we define safety metrics $\safetymetric$ to be functions over system trajectories that only output positive numbers for trajectories exhibiting the desired safe behavior:
\begin{equation}
    \label{eq:robustness_metric}
    \safetymetric\left(\phi^{\modelcontroller}\left(\hat x_0, \xi, \theta \right) \right) \geq 0 \iff
    \begin{gathered}
        \phi^{\modelcontroller}\left(\hat x_0, \xi, \theta \right) \mathrm{~exhibits} \\
        \mathrm{desired~safe~behavior}.
    \end{gathered}
\end{equation}
Examples of safety metrics include robustness measures from Signal Temporal Logic~\cite{donze2010robust} or the minimum value of a barrier function over a finite-time horizon~\cite{ames2016control}.  

Ideally, we would like for our controller $\modelcontroller$ to only ever realize trajectories $\phi^{\modelcontroller}(\hat x_0,\xi,\theta)$ with a positive evaluation under this safety metric $\safetymetric$.  To check for this positivity, we will first draw $(\hat x_0,\theta)$ uniformly from $\hat{\mathcal{X}} \times \Theta$.  Then, we will evaluate the safety of one trajectory emanating from that initial condition $\hat x_0$ with that parameter $\theta$, \textit{i.e.} record $s = \safetymetric\left(\phi^{\hat{\mathcal{U}}}\left(\hat x_0, \xi, \theta \right) \right)$ for some valid noise-sequence $\xi$.  Repeating this procedure $N$ times to take $N$ such samples $s_i$ and create the dataset $\{s_i\}_{i=1}^N$, we then define $s^*_N = \min\{s_i\}_{i=1}^N$.  Then, via Corollary 2 in~\cite{akella2022scenario}, we have the following result:
\begin{corollary}
\label{corr:randomized_verification}
Let the uncertain system trajectory $\phi^{\modelcontroller}(\hat x_0,\xi,\theta)$ be as defined in~\eqref{eq:uncertain_trajectory}, the safety metric $\safetymetric$ be as defined in~\eqref{eq:robustness_metric}, $\{s_i\}_{i=1}^N$ be the safety values of $N$ sampled trajectories $\phi^{\modelcontroller}(\hat x_0,\xi,\theta)$ with initial conditions and parameters $(x_0,\theta)$ drawn from $\uniform[\hat{\mathcal{X}}\times\Theta]$, and $s^*_N = \min\{s_i\}_{i=1}^N$.  Then, $\forall~\epsilon\in[0,1]$ and abbreviating $\uniform[\hat{\mathcal{X}}\times\Theta] = \pi_0$,
\begin{equation}
    \begin{gathered}
    S_3 \triangleq \prob_{\pi_0,~\xi_j \sim \uniform[D]~\forall~j=1,2,\dots}\left[ s \geq s^*_N \right], \\
    \prob^N_{\pi_0,~\xi_j \sim \uniform[D]~\forall~j=1,2,\dots}\left[S_3 \geq 1-\epsilon \right] \geq 1-(1-\epsilon)^N.
    \end{gathered}
\end{equation}
In other words, the probability that $s^*_N$ will be smaller than any sample-able safety value $s$ is at minimum $1-\epsilon$ with confidence $1-(1-\epsilon)^N$.
\end{corollary}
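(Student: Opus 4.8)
The plan is to recognize that this statement has exactly the same structure as Theorem~\ref{thm:sim2real_theorem} and Corollary~\ref{corr:model_accuracy}, so I would prove it by casting the empirical minimum $s^*_N$ as the solution of a one-dimensional scenario program and invoking Theorem~\ref{thm:scenario_opt} with $d = 1$. The only conceptual wrinkle is that the source of randomness $\delta$ is now the triple consisting of the initial condition $\hat x_0$, the parameter $\theta$, and the entire noise sequence $\xi$, all drawn from the product of $\pi_0$ and the i.i.d.\ $\uniform[D]$ draws; I would bundle these into a single sample $\delta = (\hat x_0, \theta, \xi)$ living in the appropriate product probability space so that the abstract machinery of Section~\ref{sec:backgroundinfo} applies verbatim.

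First, I would write down the uncertain program whose robust solution is the true worst-case safety value. With decision variable $r \in \mathbb{R}$, consider minimizing $-r$ subject to $r \leq \safetymetric\left(\phi^{\modelcontroller}(\hat x_0, \xi, \theta)\right)$ for every realization $\delta = (\hat x_0, \theta, \xi)$. This objective is linear in $r$ and each constraint carves out a half-line, so the program is convex and fits the template of~\eqref{eq:uncertain_program} with $d = 1$. The corresponding $N$-sample scenario program~\eqref{eq:scenario_program} enforces $r \leq s_i$ for each of the $N$ sampled trajectories, whence its optimizer is precisely $s^*_N = \min\{s_i\}_{i=1}^N$.

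Next, I would identify the violation probability of this scenario solution. By Definition~\ref{def:violation}, $V(s^*_N)$ is the probability of drawing a fresh $\delta$ whose constraint $s^*_N \leq \safetymetric(\phi^{\modelcontroller}(\hat x_0,\xi,\theta))$ is violated, i.e.\ $V(s^*_N) = \prob[s < s^*_N]$, so that $S_3 = \prob[s \geq s^*_N] = 1 - V(s^*_N)$. Applying Theorem~\ref{thm:scenario_opt} with $d = 1$ collapses the binomial sum to its single $i = 0$ term, giving $\prob^N[V(s^*_N) > \epsilon] \leq (1-\epsilon)^N$. Taking complements and substituting $V = 1 - S_3$ yields $\prob^N[S_3 \geq 1-\epsilon] = \prob^N[V(s^*_N) \leq \epsilon] \geq 1 - (1-\epsilon)^N$, which is the claim. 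This is exactly the ``reversing inequalities and changing arguments'' step used to close the proof of Theorem~\ref{thm:sim2real_theorem}.

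The main obstacle is verifying Assumption~\ref{assump:RPN_solvability}, namely that the scenario program is solvable with a \emph{unique} minimizer for every sample set. Solvability requires the safety metric to take finite values on the sampled trajectories so that the minimum is attained; uniqueness can fail when two samples tie for the minimum, but under the continuous sampling measures $\pi_0$ and $\uniform[D]$ this ties event has measure zero, and in any case can be enforced by a fixed tie-breaking rule. This is precisely the technical bookkeeping already discharged by Corollary 2 in~\cite{akella2022scenario}, which I would cite to close the argument rather than re-deriving it.
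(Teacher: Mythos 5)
Your proof is correct and is essentially the argument the paper relies on: the paper's own proof is a one-line citation of Corollary 2 in \cite{akella2022scenario}, which is precisely the $d=1$ scenario-program reduction you spell out (and which mirrors the paper's own proof of Theorem~\ref{thm:sim2real_theorem}, down to the ``reversing inequalities'' step). One minor quibble: uniqueness of the scenario solution is automatic here because the optimizer is the single real number $\min_i s_i$ regardless of ties among the samples, so your tie-breaking discussion, while harmless, is not actually needed to satisfy Assumption~\ref{assump:RPN_solvability}.
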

\begin{proof}
    This is an application of Corollary 2 in~\cite{akella2022scenario}.
\end{proof}

\begin{figure*}[t]
    \centering
    \includegraphics[width = \textwidth]{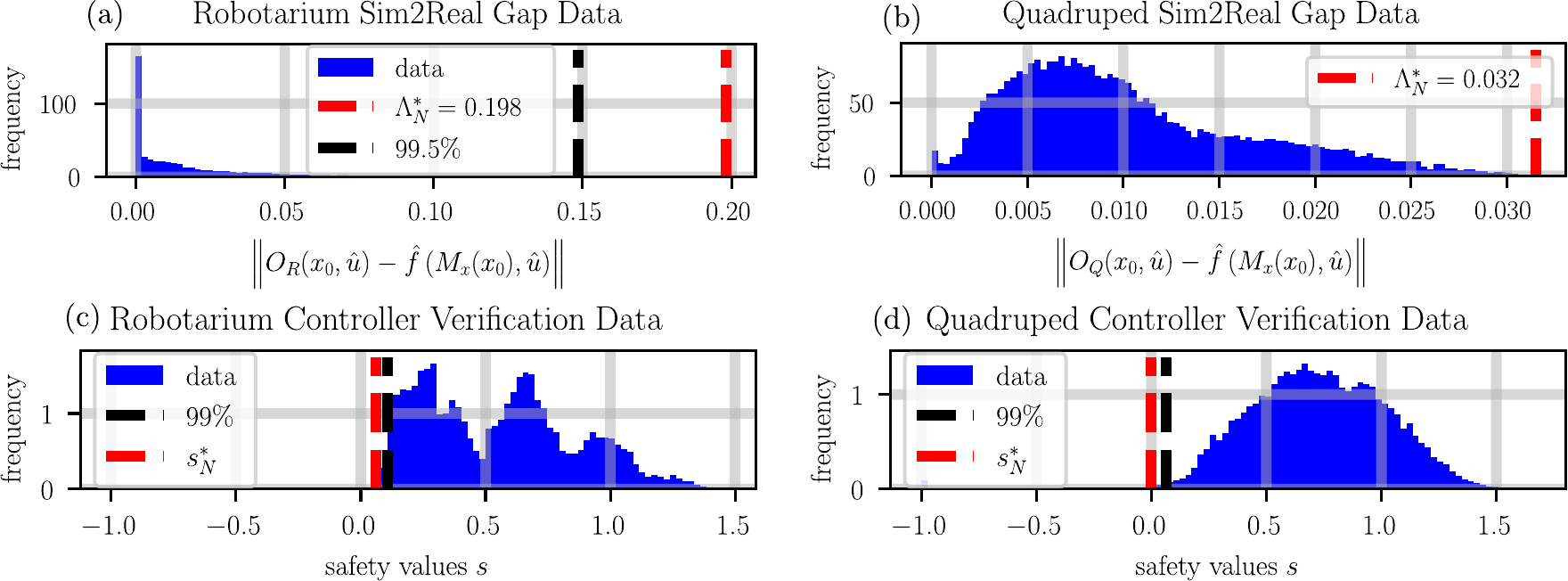}
    \caption{All data for our experimental pipeline.  (Top Left) Data for calculation of a probabilistic sim2real gap $\simrealgap^*_N$ for the Robotarium.  Per Theorem~\ref{thm:sim2real_theorem}, we expect that after observing $600$ randomly sampled errors, our reported sim2real gap $\simrealgap^*_N = 0.198$ is greater than any sample-able gap with minimum probability $99.5\%$ with $95\%$ confidence.  Comparing $\simrealgap^*_N$ to the true cutoff after taking $2400$ samples verifies this inequality and supports Theorem~\ref{thm:sim2real_theorem}.  (Top Right) Data for sim2real gap calculation for the quadruped.  True cutoffs are not shown as we did not exhaustively sample gaps.
    (Bottom) Verification data for both controllers against their respective uncertain models.  Note that in both cases, we sampled $300$ trajectories to calculate a minimum safety value $s^*_N$ which, according to Corollary~\ref{corr:randomized_verification}, should be less than any sample-able safety value with minimum probability $99\%$ with $95\%$ confidence.  Taking $20000$ safety samples and calculating the true cutoffs against the sampled data shows that this inequality holds verifying Corollary~\ref{corr:randomized_verification}.}
    \label{fig:experiment_data}
    \vspace{-0.2 in}
\end{figure*}

This completes the theoretical statement of our proposed pipeline for safety-critical controller verification via sim2real gap quantification.  For the intermediate synthesis step, one can really use any method they like \textit{e.g.} robust control methods~\cite{zhou1998essentials,green2012linear}, input-to-state-stable Lyapunov or barrier functions~\cite{xu2015robustness,kolathaya2018input,sontag1995characterizations,hespanha2008lyapunov}, \textit{etc}.  The emphasis here is on verifying the resulting controller against the uncertain model and using Corollary~\ref{corr:randomized_verification} to discriminate between better controllers - those with a higher minimum probability of realizing safe trajectories - and worse controllers - those with a lower minimum probability.  By Theorem~\ref{thm:sim2real_theorem} and Corollary~\ref{corr:randomized_verification} we know that our uncertain model is a decent representor of system behavior.  Therefore, those controllers with a high probability of exhibiting desired safe behavior in the uncertain simulator should likely have a high probability of exhibiting desired safe behavior on hardware.  We will demonstrate this procedure on two hardware platforms - the Robotarium~\cite{wilson2020robotarium} and a quadruped.
\section{Experimental Demonstrations}
\label{sec:experimentation}
We will describe our procedure's implementation on both systems simultaneously, as we aim to represent both systems with the same model abstraction, a unicycle:
\begin{equation}
    \label{eq:experimental_systems}
    \begin{aligned}
        x_{k+1} & = f(x_k,u_k),~x_{k,k+1} \in \mathcal{X},~u_k \in \mathcal{U},~t_{k+1} - t_k = \Delta t \\
        \hat x_{j+1} & = \hat x_j + \Delta \hat{t}
        \begin{bmatrix}
        \cos\left(\hat x[3]\right) & 0 \\
        \sin\left(\hat x[3]\right) & 0 \\
        0 & 1
        \end{bmatrix}\hat u_j
        ,~\hat x_{j,j+1} \in \hat{\mathcal{X}},~\hat u_{j} \in \hat{\mathcal{U}}.
    \end{aligned}
\end{equation}
The specific parameters for each system are as follows (Robotarium (R) and Quadruped (Q)):
\begin{enumerate}
    \item[(R)] $\hat{\mathcal{X}} = [-1.6 \times 1.6] \times [-1,1] \times [0,2\pi]$, $\hat{\mathcal{U}} = [-0.2,0.2] \times [-\pi,\pi]$, $\Delta t,\Delta \hat{t} = 0.033$.
    \item[(Q)] $\hat{\mathcal{X}} = [-2.5,2.5]^2 \times [0,2\pi]$, $\hat{\mathcal{U}} = [-0.15,0.15] \times [-0.3,0.3]$, $\Delta t = 0.001$, and $\Delta \hat{t} = 0.1$.
\end{enumerate}

For both systems, we can read true state data $x$ to recover the idealized unicycle state $\hat x$.  Therefore, $M_x$ is just a projection for both systems.  The Robotarium permits unicycle-like commands to their agents resulting in $M_u = I_{2 \times 2}$.  On the other hand, the quadruped has a lower-level walking controller that operates at $1$ kHz to realize commanded forward walking and yaw angular velocities \cite{ubellacker2023icra}.  As a result, $M_u$ for the quadruped is this pre-built walking controller.  Finally, our observation maps:
\begin{enumerate}
    \item[(R)] $O_R(x_0,\hat{u}) = M_x(x_1)$ - read the projected true state after one time-step, and,
    \item[(Q)] $O_Q(x_0,\hat{u}) = M_x(x_{100})$ - read the projected true state after $100$ time-steps.
\end{enumerate}

\newidea{Sampling from the Comparison Distribution:} To calculate the discrepancy between the system and our chosen model, we will sample from the comparison distribution $\pi$ (Definition~\ref{def:comparison_distribution}) with the steps listed below:
\begin{enumerate}
\item[$(x_0)$] We uniformly randomly sample a planar position $x$ from $\hat{\mathcal{X}}$.  Then, we send both agents to the waypoint $x$ using a Lyapunov controller built on top of the input maps $M_u$ for both systems. Once the system reaches a ball of $0.1$ m around the desired waypoint, we stop and record the resulting state as $x_0$.
\item[$(\hat{u})$] We uniformly randomly sample an input $\hat {u}$ from $\hat{\mathcal{U}}$.  Then, we command the system with this input for $50$ true-system time-steps for the Robotarium and $1000$ true-system time-steps for the quadruped.  We command this input for an extended period to approximate the randomized initial location that we had before sampling $x_0$ so that subsequent samples drawn from this procedure are drawn effectively independently.
\end{enumerate}

\newidea{Probabilistic Sim2Real Gap Calculation:}  For the Robotarium, we collected $2400$ observations and used the first $600$ to calculate constraints for~\eqref{eq:sim2real_scenario}.  This provided a sim2real gap constant $\simrealgap^*_N = 0.198$ which, according to Theorem~\ref{thm:sim2real_theorem}, is greater than any sampled sim2real gap with minimum probability $99.5\%$ with minimum confidence $95\%$.  Figure~\ref{fig:experiment_data}~(a) shows the probabilistic sim2real gap $\simrealgap^*_N$ overlaid on a histogram of sampled sim2real gaps to approximate the underlying distribution.  Notice that since $\simrealgap^*_N$ is indeed greater than the $99.5\%$ cutoff, this corroborates Theorem~\ref{thm:sim2real_theorem}.  Figure~\ref{fig:experiment_data}~(b) shows similar results for the quadruped after taking $100$ observations.  The true cutoff value is not shown as we did not exhaustively sample observations to approximate the underlying distribution.  Although, with Theorem~\ref{thm:sim2real_theorem} and the Robotarium results, we are confident that the calculated sim2real gap is greater than any sample-able gap with minimum probability $97\%$ and with confidence $95\%$.

\newidea{Safety-Critical Controller Verification:} Figure~\ref{fig:controller_architecture} shows the general controller architecture for which we aim to identify parameters such that the resulting controller $\hat{U}$ has a high probability of rendering satisfactory behavior on the uncertain model.  To synthesize such a controller, we use a safety metric $\safetymetric$ as per~\eqref{eq:robustness_metric} that outputs $-1$ if the agent crashes into either a static or moving obstacle and outputs the Manhattan distance traveled along the shortest feasible path to a goal - the orange line in Figure~\ref{fig:controller_architecture} - if it successfully avoids crashes within $200$ time-steps.  We do not formally define this metric as it is not central to the paper's concept.

For both systems, we iterated through at least $10$ different sets of controllers $\modelcontroller$ with parameter spaces, $\Theta =$
\begin{enumerate}
    \item[(R)] All possible setups of $10$ static obstacles, $3$ goals, and $1$ initial condition cell in an $8\times5$ grid such that a feasible path exists between the initial cell and at least one goal.  This is in addition to the starting locations and movement directions of $3$, uncontrolled moving obstacles.
    \item[(Q)] All possible setups of $5$ static obstacles, $1$ goal, and $1$ initial condition cell in a $5\times5$ grid such that a feasible path exists between the initial cell and the goal.
\end{enumerate}
Once we found controllers that, according to Corollary~\ref{corr:randomized_verification}, had a minimum satisfaction probability of at least $99\%$ with $95\%$ confidence, we implemented these controllers $\modelcontroller_R,\modelcontroller_Q$ on their respective systems, the Robotarium and the quadruped.  In each verification step, we evaluated the controllers under $500$ randomized scenarios - $500$ test scenarios we would have otherwise had to run on real systems were we not using our uncertain model to approximate true-system behavior.  Figures~\ref{fig:experiment_data}~(c) and (d) show the verification data for the finalized controllers $\modelcontroller_R,\modelcontroller_Q$, respectively, with the distribution approximated by evaluating $20000$ randomized scenarios.

According to our pipeline then, for the fact that we verified our controllers - Corollary~\ref{corr:randomized_verification} - against an uncertain model which has a high probability of representing true system behavior - Theorem~\ref{thm:sim2real_theorem} and Corollary~\ref{corr:model_accuracy} - these controllers should similarly exhibit satisfactory behavior on their true systems when implemented.  Figure~\ref{fig:experiment_tiles} depicts the first four tests underwent by both systems - they successfully completed their tasks in these tests as expected.  We ran $40$ more randomized tests for the Robotarium, drawing from the same parameter space $\Theta$ as described earlier - all successes.  Similarly, we ran $10$ more tests for the Quadruped drawing from its respective parameter space $\Theta$ - all successes.  We expected this level of performance since the controllers exhibited a high probability of realizing desired safe behaviors on their respective uncertain models that encapsulated true system behavior.  Furthermore, we did not have to run any tests on hardware to gain this level of confidence in our controller, except for those tests required to calculate the sim2real gap.

\section{Conclusion}
We present a pipeline for safety-critical controller verification via sim2real gap quantification.  Our pipeline starts by augmenting the nominal model with an uncertainty set generated via a probabilistic sim2real gap analysis between the model and the system it represents.  By using this uncertain model for synthesis and verification, we limit the number of hardware tests we have to run to develop effective, safe controllers that exhibit satisfactory performance on hardware.  We showcase our procedure successfully developing satisfactory controllers on two hardware platforms - the Robotarium and a quadruped.  In future work, we hope to add a randomized synthesis step to our pipeline, to automate the generation of safe, effective controllers while minimizing the need for extensive/expensive hardware tests throughout.

\bibliographystyle{IEEEtran}
\balance
\bibliography{IEEEabrv,bib_works}

\end{document}